\def\BibTeX{{\rm B\kern-.05em{\sc i\kern-.025em b}\kern-.08em
    T\kern-.1667em\lower.7ex\hbox{E}\kern-.125emX}}
 \definecolor{Red}{HTML}{E53E30}  
\definecolor{Green}{HTML}{00AD69}  
\definecolor{Blue}{HTML}{2171b5}
\definecolor{Purple}{HTML}{652F6C}  
\newtheorem{theorem}{Theorem}
\newtheorem{lemma}{Lemma}
\newtheorem{proposition}{Proposition}
\begin{document}

\title{Checking errors in reversible circuits}
\title{Reversible Circuits are Different: \\{\Large The Implications of Reversibility on Checking Errors}}
\title{Characteristics of Reversible Circuits for Verification:\\{\LargeThe Implications of Reversibility on Checking Errors}}
\title{Characteristics of Reversible Circuits for Error Detection}

\author{Lukas Burgholzer\IEEEauthorrefmark{1}
\hspace{5em} Robert Wille\IEEEauthorrefmark{1}\IEEEauthorrefmark{2}
\hspace{5em} Richard Kueng\IEEEauthorrefmark{1}\\
\IEEEauthorrefmark{1}Institute for Integrated Circuits, Johannes Kepler University Linz, Austria\\
\IEEEauthorrefmark{2}Software Competence Center Hagenberg GmbH (SCCH), Austria\\
\{lukas.burgholzer, robert.wille, richard.kueng\}@jku.at\\
\url{https://iic.jku.at/eda/research/quantum/}
}

\maketitle

\begin{abstract}


In this work, we consider error detection via simulation for reversible circuit architectures.
We rigorously prove that reversibility augments the performance of this simple error detection protocol to a considerable degree. 
A single randomly generated input is guaranteed to unveil a single error with a probability that only depends on the size of the error, not the size of the circuit itself. Empirical studies confirm that this behavior typically extends to multiple errors as well.  
In conclusion, reversible circuits offer characteristics that reduce masking effects -- a desirable feature that is in stark contrast to irreversible circuit architectures.


\end{abstract}


\section{Introduction}

The detection of errors is a fundamental problem in electrical engineering and computer science.
Given two circuits~$C_1$ and $C_2$ with $n$ inputs and $m$ outputs the task is to decide whether they describe the same functionality on the logical level. 

Many approaches exist that address this important and challenging problem. 
In this work, we focus on error detection protocols that only require simulation runs of the two circuits---as opposed to formal verification techniques which explicitly utilize structural knowledge about both circuits~\cite{ dischCombinationalEquivalenceChecking2007, marques-silvaCombinationalEquivalenceChecking1999, molitorEquivalenceCheckingDigital2010, jhaEquivalenceCheckingUsing1997, clarkeModelChecking2018, biereSymbolicModelChecking1999}. 
This is a severe restriction, but simulations alone are---in principle---sufficient to solve this task. If the two circuits are equivalent, they have the same input-output behavior.
Conversely, suppose that they are functionally distinct. Then, there exists at least one input string for which the two circuits produce \emph{distinct} outputs. In formulas:
\begin{equation}
\exists \vec{x}
\in \left\{0,1\right\}^n \quad \text{such that} \quad 
C_1(\vec{x}) \neq C_2 (\vec{x}).
\label{eq:witness}
\end{equation}
Such an input successfully detects the discrepancy between $C_1$ and $C_2$ and serves as a counterexample for the equivalence of both circuits.

The problem, however, is how to find counterexamples \eqref{eq:witness}.
If we only allow simulations of both circuits, i.e., we consider them as black boxes,
we do not have actionable advice on how to choose promising input strings and we may as well generate inputs uniformly at random: \mbox{$\vec{x} \sim \text{Unif}\left( \left\{0,1\right\}^n\right)$}, i.e., we flip an unbiased coin for each input value ($\vec{x}=(x_n,\ldots,x_1)$, where $x_n,\ldots,x_1 \sim x$ and $\mathrm{Pr} \left[x=0\right]=\mathrm{Pr}[x=1]=1/2$).
Subsequently, we simulate both circuits with this input and check whether they produce the same output: $C_1 (\vec{x})\overset{?}{=}C_2 (\vec{x})$.
If the outputs are distinct, we have found a counterexample.
The circuits cannot be equivalent.
But if the outputs are the same, the test is inconclusive. In this case, we must repeat it with new (randomly generated) inputs until we either find a counterexample (non-equivalence) or have exhausted all $2^n$ possible inputs (equivalence).  
The latter, unfortunately, can be a very real possibility. The two circuits $C_1$ and $C_2$ may differ on a single input only and it is extremely unlikely to quickly find this input by (random) chance. 


To make matters worse, classical circuits can mask even ``small'' errors very effectively. For $n=8$, this is illustrated in Fig.~\ref{fig:irreversible}. A cascade of logical \textsc{AND} gates, realizing the functionality \mbox{$y=x_n\cdot\hdots\cdot x_1$} (ideal circuit $C_1$), is affected by a single bit-flip error (erroneous implementation $C_2$) in the second layer. It is easy to check that only~4 out of all $2^8=256$ input strings can detect this discrepancy.

\begin{figure}[t]
    \centering
    \resizebox{0.9\linewidth}{!}{
    \begin{tikzpicture}[baseline,scale=0.5]
    \foreach \y in {-1,0,...,2}
    {\draw (-0.45,\y-0.45) rectangle (0.45,\y+0.45);
    \node at (0,\y) {$\wedge$};
    \draw (-2,\y-0.25) -- (-0.45,\y-0.25);
    \draw (-2,\y+0.25) -- (-0.45,\y+0.25);
    \draw (0.45,\y) -- (2,\y);
    };
    \fill[white] (0.75,0.5) rectangle (1.75,1.5);
    \fill[red,opacity=0.2] (0.75,0.5) rectangle (1.75,1.5);
    \draw (0.75,0.5) rectangle (1.75,1.5);
    \node at (1.25,1) {$\neg$};
    \draw (2,0.75) rectangle (3.5,2.25);
    \node at (2.75,1.5) {$\wedge$};
    \draw (2,-1.25) rectangle (3.5,0.25);
    \node at (2.75,-0.5) {$\wedge$};
    \draw (3.5,1.5) -- (5.5,1.5);
    \draw (3.5,-0.5) -- (5.5,-0.5);
    \draw (5.5,-0.75) rectangle (8,1.75);
    \node at (6.75,0.5) {$\wedge$};
    \draw (8,0.5) -- (10,0.5);
    \foreach \y in {1,2,...,8}
    {\node at (-2.5,2.75-0.5*\y) {$x_\y$};
    };
    \node at (10.5,0.5) {$y'$};
    \end{tikzpicture}}
    \caption{\emph{Error detection in classical circuits is hard:} Suppose that a cascade of logical \textsc{AND} gates, realizing the Boolean function $y=x_8\cdot\hdots\cdot x_1$, 
    is affected by a single bit-flip error (red) in the second layer. 
    Only $4$ out of the $2^8=256$ possible input strings can detect this error.}
    \label{fig:irreversible}
\end{figure}
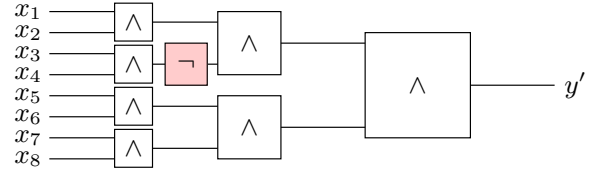

Masking is a serious issue for error detection using simulation techniques.
No malicious intent is required to fool randomly generated inputs. The circuit may do it all by itself. 
Needless to say, this issue has been well-known for decades.
Error detection based on random inputs (alone) often pales in comparison to other more sophisticated techniques. 
Today's state of the art is governed by constrained-based stimuli generation techniques~\cite{yuanConstraintbasedVerification2006, biereSATATPGBoolean2002, willeSMTbasedStimuliGeneration2009, kitchenStimulusGenerationConstrained2007, gentFastMultilevelTest2016}, fuzzing~\cite{laeuferRFUZZCoveragedirectedFuzz2018}, etc. 
But on the positive side, error detection using randomly-chosen inputs is based on minimal assumptions, namely the possibility to simulate two circuits as black boxes.
Moreover, it is intuitive and individual simulation runs are easy and fast to execute. 

\section{Summary of results: \\Error detection in reversible circuits}

We have seen that, in general, simulation with (uniformly) random inputs is not a viable strategy for detecting errors in classical circuits. Already a single ``small'' error can be exceedingly difficult to detect (masking). 
Perhaps surprisingly, this dark picture lightens up considerably if we consider \emph{reversible} implementations of logical functionalities. 
As the name suggests, reversible circuits are circuits whose action can be undone by running the circuit backwards. More formally, $n$-bit reversible circuits implement permutations on the set of all $2^n$ bit strings.
This, in particular, implies that the number of input and output bits must be the same ($n=m$). Despite these restrictions, reversible circuits are universal, i.e., \emph{any} logical function on $n$ bits can be implemented by a reversible circuit~\cite{toffoliReversibleComputing1980} and efficient mapping techniques are readily available~\cite{zulehnermakeitreversible2017, maslovReversibleCascadesMinimal2004, zilicReversibleCircuitTechnology2007}
(this implementation may require strictly more than $n$ bits, though). Negation ($\textsc{NOT}$), exclusive or (\textsc{CNOT}) and the Toffoli gate (\textsc{CCNOT}) are examples of simple reversible functionalities. 
Viewed as a logic gate, $\textsc{CCNOT}$ is also universal. Every reversible circuit can be constructed from Toffoli gates alone~\cite{toffoliReversibleComputing1980}.
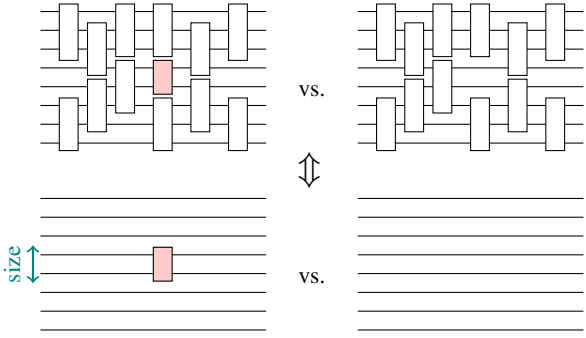
\begin{figure}
    \centering
\begin{tabular}{rcl}
\begin{tikzpicture}[baseline,scale=0.5]
    \foreach \y in {-1.25,-0.75,...,2.25}
    {\draw (-3,\y) -- (3,\y);
    };
    \draw[fill=white] (-2.5,0.95) rectangle (-2,2.45);
    \draw[fill=white] (-2.5,-1.45) rectangle (-2,-0.05);
    \draw[fill=white] (-1.75, -0.95) rectangle (-1.25,0.45);
    \draw[fill=white] (-1.75,0.55) rectangle (-1.25,1.95);
    \draw[fill=white] (-1,1.05) rectangle (-0.5,2.45);
    \draw[fill=white] (-1,-0.45) rectangle (-0.5,0.95);
    \draw[fill=white] (0,0.05) rectangle (0.5,0.95);
    \draw[fill=red,opacity=0.2] (0,0.05) rectangle (0.5,0.95);
    \draw[fill=white] (0,-1.45) rectangle (0.5,-0.05);
    \draw[fill=white] (0,1.05) rectangle (0.5,2.45);
    \draw[fill=white] (1,-0.95) rectangle (1.5,0.45);
    \draw[fill=white] (1,0.55) rectangle (1.5,1.95);
    \draw[fill=white] (2,0.95) rectangle (2.5,2.45);
    \draw[fill=white] (2,-1.45) rectangle (2.5,-0.05);
\end{tikzpicture}
& vs.\ &
\begin{tikzpicture}[baseline,scale=0.5]
    \foreach \y in {-1.25,-0.75,...,2.25}
    {\draw (-3,\y) -- (3,\y);
    };
    \draw[fill=white] (-2.5,0.95) rectangle (-2,2.45);
    \draw[fill=white] (-2.5,-1.45) rectangle (-2,-0.05);
    \draw[fill=white] (-1.75, -0.95) rectangle (-1.25,0.45);
    \draw[fill=white] (-1.75,0.55) rectangle (-1.25,1.95);
    \draw[fill=white] (-1,1.05) rectangle (-0.5,2.45);
    \draw[fill=white] (-1,-0.45) rectangle (-0.5,0.95);
    \draw[fill=white] (0,-1.45) rectangle (0.5,-0.05);
    \draw[fill=white] (0,1.05) rectangle (0.5,2.45);
    \draw[fill=white] (1,0.55) rectangle (1.5,1.95);
    \draw[fill=white] (1,-0.95) rectangle (1.5,0.45);
    \draw[fill=white] (2,0.95) rectangle (2.5,2.45);
    \draw[fill=white] (2,-1.45) rectangle (2.5,-0.05);
\end{tikzpicture} \\
& 
\rotatebox{90}{\LARGE $\Leftrightarrow$}
& \\
\begin{tikzpicture}[baseline,scale=0.5]
    \foreach \y in {-1.25,-0.75,...,2.25}
    {\draw (-3,\y) -- (3,\y);
    };
    \draw[fill=white] (0,0.05) rectangle (0.5,0.95);
    \draw[fill=red,opacity=0.2] (0,0.05) rectangle (0.5,0.95);
    \draw[teal,<->,thick] (-3.2,0) -- (-3.2,1);
    \node[rotate=90] at (-3.75,0.5) {\textcolor{teal}{size}};
\end{tikzpicture}
& vs. &
\begin{tikzpicture}[baseline,scale=0.5]
    \foreach \y in {-1.25,-0.75,...,2.25}
    {\draw (-3,\y) -- (3,\y);
    };
\end{tikzpicture}
\end{tabular}
    \caption{\emph{Illustration of main rigorous contributions:}
    Simulations with uniformly random inputs completely expose any single error in a given reversible circuit. The two scenarios are \emph{exactly} equivalent (``no masking''). In the lower scenario, the probability of correct distinction is governed by the size $k$ of the error, not the total number of lines.
    }
    \label{fig:theory}
\end{figure}

To summarize, reversible circuits bear strong similarities with classical (irreversible) circuits, but there are some notable additional characteristics. Chief among them is reversibility itself which implies that information cannot easily escape. 
Here, we show that this has profound implications for error detection with random inputs. More precisely, 
\begin{enumerate}[(i)]
    \item reversible circuits can never mask single errors (rigorous result, see Proposition~\ref{prop:no-masking})
    \item the probability of detecting a single error only depends on its size, not the total number of bits (unsurprising rigorous result, see Lemma~\ref{lem:error-size})
    \item multiple errors are typically even easier to detect (empirical studies, see Fig.~\ref{fig:numerics} and discussions in Section~\ref{sec:numerics})
\end{enumerate}
The first two insights are mathematical statements that address single errors only. They readily follow from reversibility and fundamental properties of uniformly random input strings. We refer to Section~\ref{sec:theory} for details and Fig.~\ref{fig:theory} for illustrative caricatures. 
When combined, they imply the following confidence bound for detecting single errors with random inputs.

\begin{theorem} \label{thm:main}
Suppose that a general reversible circuit is affected by a single error of size $k$ and fix $\delta \in (0,1)$ (confidence). Then, at most $\lceil \log (1/\delta) 2^{k-1} \rceil$ randomly selected inputs suffice to witness this error with probability (at least) $1-\delta$.
\end{theorem}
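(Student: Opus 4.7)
The proof plan is to combine the two structural statements already listed with a routine independent-repetition argument. By Proposition~\ref{prop:no-masking}, reversibility prevents any surrounding circuitry from masking the error, so (after relabeling input/output wires) the error manifests as a permutation $\sigma$ on $\{0,1\}^{k}$ that differs from the identity. Lemma~\ref{lem:error-size} then asserts that the probability of exposing the discrepancy with a uniformly random $n$-bit input depends only on $\sigma$ and not on the remaining $n-k$ bits, so I can reduce the entire analysis to the $k$-bit subsystem.

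Next I would extract an explicit lower bound on the single-shot detection probability $p$. Because $\sigma$ is a bijection of a finite set and $\sigma\neq\mathrm{id}$, it must displace at least two elements: if $a$ were the unique non-fixed point and $\sigma(a)=b\neq a$, injectivity together with $\sigma(a)=b$ would force $\sigma(b)\neq b$, a contradiction. Hence at least two of the $2^{k}$ possible $k$-bit inputs witness the error, which yields
\[
p \;\geq\; \frac{2}{2^{k}} \;=\; 2^{1-k}.
\]

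With this lower bound in place I would pass to independent repetition. Drawing $N$ inputs independently and uniformly at random, the probability that none of them exposes the error is at most
\[
(1-p)^{N} \;\leq\; \bigl(1-2^{1-k}\bigr)^{N} \;\leq\; \exp\!\bigl(-N\cdot 2^{1-k}\bigr),
\]
where the last step uses the elementary inequality $1-x\leq e^{-x}$. Requiring the right-hand side to be at most $\delta$ gives $N\geq\log(1/\delta)\cdot 2^{k-1}$, and rounding up reproduces exactly the bound stated in the theorem.

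The only ingredient that is not purely mechanical is the ``at least two moved elements'' observation, which is what turns the qualitative no-masking guarantee of Proposition~\ref{prop:no-masking} into the concrete constant $2$ in the numerator of the bound on $p$. Everything else is a short concentration-style calculation riding on top of Proposition~\ref{prop:no-masking} and Lemma~\ref{lem:error-size}, so I do not anticipate any real obstacle beyond carefully stating which quantities are random and which are fixed.
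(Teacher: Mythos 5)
Your proposal is correct and follows essentially the same route as the paper: Proposition~\ref{prop:no-masking} removes the surrounding circuit, the restriction-to-$k$-bits argument with the ``at least two displaced elements'' fact (which the paper isolates as Lemma~\ref{lem:reversible}~(iii) inside the proof of Lemma~\ref{lem:error-size}) gives $p\geq 2^{-(k-1)}$, and independent repetition with $1-x\leq e^{-x}$ yields the stated bound, exactly as in the paper's Theorem~\ref{thm:main-restatement}. The only cosmetic difference is that you re-derive the two-moved-points bound yourself rather than citing it, and you attribute the marginalization over the untouched $n-k$ bits to Lemma~\ref{lem:error-size}, which is consistent with how the paper proves that lemma.
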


For $k=1$---a single bit-flip error (\textsc{NOT}) \emph{anywhere} within the circuit---this statement actually becomes deterministic: already a single (random) input is guaranteed to detect this error with certainty. We emphasize that this statement is true irrespective of the number of lines and the circuit's size. It is simply impossible to hide a single bit-flip inside a reversible circuit. Such a behavior is strikingly different from irreversible circuit architectures. There it can routinely happen that order~$2^n$ random inputs are necessary to detect even a single bit-flip error, see e.g.\ Fig.~\ref{fig:irreversible}.

The multiple-error case is much more intricate,
because error locations and circuit structure start to matter. 
This leads to drastically different behaviors of best case (independent errors) and worst case (severe masking) behavior. To better understand the typical behavior of multiple errors, we resort to numerical simulations.
These indicate a (close-to) best-case behavior: the probability of failing to detect a total of $l$ errors is exponentially suppressed in $l$, see Fig.~\ref{fig:numerics}. Additional simulation results and details are provided in Section~\ref{sec:numerics}.

\begin{figure}
    \centering
    \includegraphics[width=0.99\linewidth]{./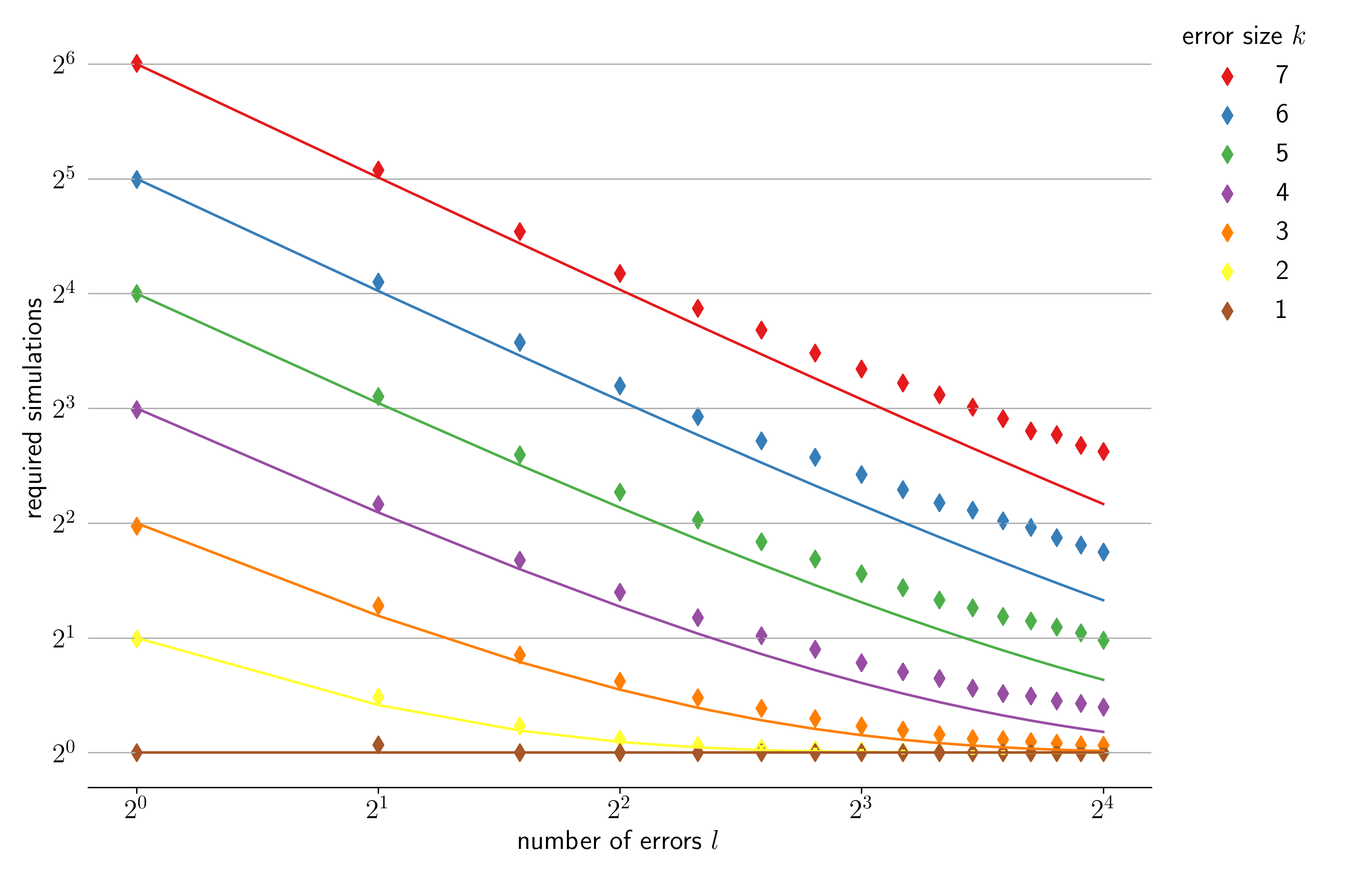}
\caption{\emph{Typical accumulation effects for multiple errors (log-log plot):} 
number~$l$ of randomly injected errors ($x$-axis) vs.\ average number of random inputs required to detect erroneous behavior ($y$-axis) in a generic \mbox{$n=20$-bit} reversible circuit with $4000$ 
gates. Different colors denote worst-case errors of increasing size $k$. 
Solid lines track the theoretical best-case behavior (independent errors, see Eq.~\eqref{eq:best-case-expectation} below).
For small $l$, the plot highlights an excellent agreement between typical (diamonds) and best-case (solid lines) behavior.}
   \label{fig:numerics}
   \vspace{-3mm}
\end{figure}




Note that a similar line of thought has recently been presented for the domain of quantum computing (which bears many similarities to reversible circuits). More precisely, a verification scheme heavily based on simulation has been proposed in~\cite{burgholzerPowerSimulationEquivalence2020} and refined in~\cite{burgholzerRandomStimuliGeneration2021}. A similar theoretical result has been presented in~\cite{lindenLightweightDetectionSmall2020}.

\vspace*{-4mm}
\section{Rigorous theory for single errors} \label{sec:theory}

\subsection{Reversible circuits and error model}

We will work in the reversible circuit model for $n$ input bits (and $n$ output bits). A high-level of mathematical abstraction already suffices to deduce powerful consequences. An $n$-bit reversible circuit implements a permutation \mbox{$R:\left\{0,1\right\}^n \to \left\{0,1\right\}^n$} of all $2^n$ bit strings. Reversing the circuit, that is running it backwards, produces the unique permutation $R^T: \left\{0,1\right\}^n\to \left\{0,1\right\}^n$ that undoes the original circuit: $R^T\circ R = R \circ R^T=\mathrm{id}$, where $\mathrm{id}(\vec{x})=\vec{x}$ for all $\vec{x} \in \left\{0,1\right\}^n$ is the identity permutation (``do nothing''). 
This defining feature suffices to deduce three elementary properties that will form the basis of our proof strategy.

\begin{lemma}[Characteristics of reversible circuits] \label{lem:reversible}
Consider  reversible circuits $R_1,R_2,R_3: \left\{0,1\right\}^n \to \left\{0,1\right\}^n$ and an $n$-bit string \mbox{$\vec{x}\in \left\{0,1\right\}^n$}. Then,
\begin{enumerate}[(i)]
\item output equivalence is unaffected by composition: \\
$R_1 (\vec{x}) = R_2 (\vec{x}) 
\Leftrightarrow (R_3 \circ R_1) (\vec{x}) = (R_3 \circ R_2) (\vec{x})$
\item invariance of the uniform distribution:  \\$\vec{x} \sim \textsc{Unif}(\left\{0,1\right\}^n)$ implies $R_1(\vec{x}) \sim \textsc{Unif}(\left\{0,1\right\}^n)$
\item non-trivial action: 
suppose $R_1 \neq \mathrm{id}$. Then, there are at least two bit strings 
such that $R_1 (\vec{x}) \neq \vec{x}$.
\end{enumerate}
\end{lemma}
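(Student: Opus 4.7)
The plan is to verify each of the three claims by exploiting the single structural fact that every $n$-bit reversible circuit corresponds to a permutation of $\{0,1\}^n$, i.e., a bijection that admits a well-defined inverse $R^T$ satisfying $R^T \circ R = R \circ R^T = \mathrm{id}$.

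For (i), the forward implication is immediate: if $R_1(\vec{x}) = R_2(\vec{x})$, then applying the (well-defined) function $R_3$ to both sides yields $(R_3 \circ R_1)(\vec{x}) = (R_3 \circ R_2)(\vec{x})$. For the converse, I would apply $R_3^T$ to both sides of $(R_3 \circ R_1)(\vec{x}) = (R_3 \circ R_2)(\vec{x})$ and use $R_3^T \circ R_3 = \mathrm{id}$ to recover $R_1(\vec{x}) = R_2(\vec{x})$. This step is where reversibility is actually used — for a general (non-invertible) $R_3$, the converse would fail.

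For (ii), I would fix an arbitrary target string $\vec{y} \in \{0,1\}^n$ and compute
\begin{equation*}
\Pr\!\left[R_1(\vec{x}) = \vec{y}\right] = \Pr\!\left[\vec{x} = R_1^T(\vec{y})\right] = 2^{-n},
\end{equation*}
where the first equality uses bijectivity (there is a unique preimage $R_1^T(\vec{y})$) and the second uses uniformity of $\vec{x}$. Since this probability is $2^{-n}$ for every $\vec{y}$, the random variable $R_1(\vec{x})$ is uniform on $\{0,1\}^n$.

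For (iii), the hypothesis $R_1 \neq \mathrm{id}$ already yields at least one $\vec{x}$ with $R_1(\vec{x}) \neq \vec{x}$; the content of the claim is to produce a second one. Set $\vec{y} := R_1(\vec{x})$, so $\vec{y} \neq \vec{x}$. I claim $R_1(\vec{y}) \neq \vec{y}$: if instead $R_1(\vec{y}) = \vec{y}$, then both $\vec{x}$ and $\vec{y}$ would map to $\vec{y}$ under $R_1$, contradicting injectivity. Hence $\vec{x}$ and $\vec{y}$ are two distinct non-fixed points. This is the only subtle step in the whole lemma, and it is really a one-line injectivity argument — so I do not anticipate any serious obstacle. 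The lemma is essentially a compact restatement of ``reversible circuits are permutations'' together with three elementary consequences.
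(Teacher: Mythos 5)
Your proposal is correct and follows essentially the same route as the paper: all three parts are derived from the fact that reversible circuits are permutations (bijections) of $\left\{0,1\right\}^n$. The only cosmetic difference is in part (iii), where you spell out the one-line injectivity argument that the paper compresses into the cited fact that a non-trivial permutation has at most $2^n-2$ fixed points.
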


\begin{proof}
All proofs utilize
the fact that reversible circuits act like permutations on the set of all $2^n$ bit strings.
\begin{enumerate}[(i)]
    \item Permutations are \emph{invertible} transformations. As such, they preserve equivalence: $y=y'$ if and only if \mbox{$R(y) = R(y')$} for any reversible circuit $R$. The claim follows from setting \mbox{$y=R_1(\vec{x})$}, \mbox{$y' = R_2 (\vec{x})$} and $R=R_3$.
    \item The uniform distribution over $n$-bit strings assigns the same weight to each of the $2^n$ bit strings. Permuting the bit strings cannot affect the weights and, by extension, the uniform distribution itself.
    \item The number of invariant bit strings ($\mbox{$\vec{x}\in\left\{0,1\right\}^n\colon$} R_1 (\vec{x})=\vec{x}$) is equal to the number of fix points of the underlying permutation. A non-trivial permutation of $2^n$ elements can have at most $2^n-2$ fix points (transposition).
\end{enumerate}
\end{proof}

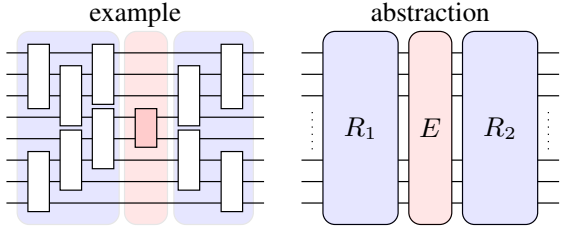
\begin{figure}
\centering
\resizebox{0.9\linewidth}{!}{
\begin{tabular}{cc}
   example & abstraction \\
    \begin{tikzpicture}[baseline,scale=0.5]
    \draw[rounded corners,fill=blue,opacity=0.1] (-2.75,-1.75)
    rectangle (-0.375,2.75);
    \draw[rounded corners, fill=red,opacity=0.1]
    (-0.25,-1.75) rectangle (0.75,2.75);
    \draw[rounded corners,fill=blue,opacity=0.1] (0.9,-1.75)
    rectangle (2.75,2.75);
    \foreach \y in {-1.25,-0.75,...,2.25}
    {\draw (-3,\y) -- (3,\y);
    };
    \draw[fill=white] (-2.5,0.95) rectangle (-2,2.45);
    \draw[fill=white] (-2.5,-1.45) rectangle (-2,-0.05);
    \draw[fill=white] (-1.75, -0.95) rectangle (-1.25,0.45);
    \draw[fill=white] (-1.75,0.55) rectangle (-1.25,1.95);
    \draw[fill=white] (-1,1.05) rectangle (-0.5,2.45);
    \draw[fill=white] (-1,-0.45) rectangle (-0.5,0.95);
    \draw[fill=white] (0,0.05) rectangle (0.5,0.95);
    \draw[fill=red,opacity=0.2] (0,0.05) rectangle (0.5,0.95);
    \draw[fill=white] (1,-0.95) rectangle (1.5,0.45);
    \draw[fill=white] (1,0.55) rectangle (1.5,1.95);
    \draw[fill=white] (2,0.95) rectangle (2.5,2.45);
    \draw[fill=white] (2,-1.45) rectangle (2.5,-0.05);
    \end{tikzpicture}
    &
   \begin{tikzpicture}[baseline,scale=0.5]
    \foreach \y in {-1.25,-0.75,-0.25,1.25,1.75,2.25}
    {\draw (-3,\y) -- (3,\y);
    };
    \draw[dotted] (-2.75,0) -- (-2.75,1);
    \draw[dotted] (2.75,0) -- (2.75,1);
    \draw[rounded corners,fill=white] (-2.5,-1.75) rectangle (-0.75,2.75);
    \fill[rounded corners,blue,opacity=0.1] (-2.5,-1.75) rectangle (-0.75,2.75);
    \node at (-1.625,0.5) {$R_1$};
    \draw[rounded corners,fill=white] (-0.5,-1.75) rectangle (0.5,2.75);
    \fill[rounded corners,red,opacity=0.1] (-0.5,-1.75) rectangle (0.5,2.75);
    \node at (0,0.5) {$E$};
    \draw[rounded corners,fill=white] (0.75,-1.75) rectangle (2.5,2.75);
    \fill[rounded corners,blue,opacity=0.1] (0.75,-1.75) rectangle (2.5,2.75);
    \node at (1.625,0.5) {$R_2$};
    \end{tikzpicture} 
    \end{tabular}}
    \caption{\emph{(Single) error model and compatible circuit decomposition:} 
    An ideal reversible circuit (blue) is corrupted by a single reversible error (red). The error location begets a decomposition of ideal and corrupted circuit into matching constituents: $R=R_2 \circ R_1$ (ideal) and $\tilde{R}=R_2 \circ E \circ R_1$ (corrupted).}
    \label{fig:error-decomposition}
\end{figure}

Different reversible circuits of compatible bit-size~$n$ can be combined to yield another (larger) circuit: \mbox{$(R_2 \circ R_1) (\vec{x})=R_2\left( R_1 (\vec{x})\right)$} for input $\vec{x} \in \left\{0,1\right\}^n$ (``composition''). The reverse direction is also possible (``decomposition'') and, arguably, more interesting. Circuit diagrams provide a well-established tool that does precisely that. They decompose a possibly complicated circuit into a structured sequence of simpler building blocks.
We use circuit decomposition on a rather high level to reason about \emph{single} errors in reversible circuits. 
Suppose that an $n$-bit reversible circuit $R$ is affected by a reversible error $E$ that produces a functionally different circuit $\tilde{R}$.
Then, the location of this error within the circuit suggests a compatible decomposition into three parts:
\begin{enumerate}[(i)]
    \item $R_1: \left\{0,1\right\}^n \to \left\{0,1\right\}^n$ describes the original functionality up to the location where the error occurs (``past''),
    \item $E:\left\{0,1\right\}^n \to \left\{0,1\right\}^n$ captures the error as an additional circuit layer on all $n$ bits (``present''),
    \item $R_2: \left\{0,1\right\}^n \to \left\{0,1\right\}^n$ describes the original functionality from the error location onwards (``future'').
\end{enumerate}
In summary,
\begin{equation}
\tilde{R}= R_2 \circ E \circ R_1,
\quad \text{while} \quad R = R_2 \circ R_1,
\label{eq:single-error-model}
\end{equation}
and we refer to Fig.~\ref{fig:error-decomposition} for a visual illustration.

\subsection{No masking for random inputs}

We now have all building blocks in place to present and derive the main conceptual result of this work. 
It addresses the probability of detecting single errors in arbitrary reversible circuits \eqref{eq:single-error-model} based on a single random input $\vec{x} \sim \mathrm{Unif}(\left\{0,1\right\}^n)$.

\begin{proposition}[no masking] \label{prop:no-masking}
Fix $R=R_2 \circ R_1$ (ideal circuit) and $\tilde{R}=R_2 \circ E \circ R_1$ (single, reversible error).
Then, the probability of detecting this discrepancy with a random input $\vec{x}\sim \mathrm{Unif}(\left\{0,1\right\}^n)$ 
only depends on the error $E$, not the actual circuit. More precisely,
\begin{equation*}
\mathrm{Pr}
\left[
\tilde{R}(\vec{x}) \neq R (\vec{x}) \right]
=\mathrm{Pr}
\left[E(\vec{x})\neq \vec{x}\right],
\end{equation*}
where the probability is taken with respect to the uniform distribution over all $2^n$ possible input strings.

\end{proposition}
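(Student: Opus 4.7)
The plan is to peel off $R_1$ and $R_2$ from the event $\{\tilde{R}(\vec{x}) \neq R(\vec{x})\}$ using the two structural facts about reversible maps from Lemma~\ref{lem:reversible}, and then invoke the invariance of the uniform distribution to reduce to the canonical event $\{E(\vec{x}) \neq \vec{x}\}$.

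First I would rewrite the event pointwise. Fix any $\vec{x} \in \{0,1\}^n$ and apply Lemma~\ref{lem:reversible}(i) with $R_3 = R_2$ to the pair of bit strings $E(R_1(\vec{x}))$ and $R_1(\vec{x})$. This gives the equivalence
\begin{equation*}
\tilde{R}(\vec{x}) \neq R(\vec{x})
\;\Longleftrightarrow\; E(R_1(\vec{x})) \neq R_1(\vec{x}),
\end{equation*}
so the ``future'' block $R_2$ drops out of the comparison entirely. Taking probabilities over $\vec{x} \sim \mathrm{Unif}(\{0,1\}^n)$, the event on the left has the same probability as the event on the right.

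Next I would eliminate $R_1$. Introduce the pushforward $\vec{y} = R_1(\vec{x})$. By Lemma~\ref{lem:reversible}(ii), $\vec{y}$ is again uniformly distributed on $\{0,1\}^n$. Consequently,
\begin{equation*}
\Pr\bigl[E(R_1(\vec{x})) \neq R_1(\vec{x})\bigr]
= \Pr\bigl[E(\vec{y}) \neq \vec{y}\bigr]
= \Pr\bigl[E(\vec{x}) \neq \vec{x}\bigr],
\end{equation*}
where the last equality just renames the (uniformly distributed) dummy variable. Chaining this with the previous step yields the claim.

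There is no real obstacle here; the only subtle point worth flagging in the write-up is that Lemma~\ref{lem:reversible}(i) must be applied in its contrapositive form (preservation of \emph{in}equivalence under composition), which is immediate because reversible circuits are bijections. Everything else is a one-line bookkeeping argument, and the statement that the detection probability depends on $E$ alone is essentially built into the two properties already established.
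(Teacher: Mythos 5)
Your proposal is correct and follows essentially the same route as the paper's own proof: Lemma~\ref{lem:reversible}(i) with $R_3=R_2$ to strip the ``future'' block, then Lemma~\ref{lem:reversible}(ii) to absorb $R_1$ into the uniform distribution. Working directly with the inequality event (the contrapositive form of (i)) rather than the equality event is only a cosmetic difference.
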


\begin{proof}
This statement is an immediate consequence of two elementary characteristics of reversible circuit architectures. 
Apply Lemma~\ref{lem:reversible} (i)
to remove the effect of $R_2$,
\begin{align*}
\mathrm{Pr}\left[ \tilde{R}(\vec{x}) = R(\vec{x}) \right]
=& \left[ R_2 \circ E \circ R_1 (\vec{x}) = R_2 \circ R_1 (\vec{x}) \right] \\
=& \,\mathrm{Pr} \left[ E \left( R_1 (\vec{x}) \right) = R_1 (\vec{x}) \right],
\end{align*}
and note that, according to Lemma~\ref{lem:reversible} (ii), \mbox{$\vec{x} \sim \mathrm{Unif}(\left\{0,1\right\}^n)$} implies \mbox{$R_1 (\vec{x}) \sim \mathrm{Unif}(\left\{0,1\right\}^n)$}.
\end{proof}
\vspace{200cm}

Although simple to prove, Proposition~\ref{prop:no-masking} pinpoints remarkable differences between reversible and irreversible circuits.
As illustrated in Fig.~\ref{fig:theory}, the former cannot hide errors from randomly sampled inputs (``no masking'').

We emphasize that a uniformly random selection of input strings is crucial to arrive at such a powerful conclusion. 
Reversibility alone is enough to ignore the final portion of the circuit $R_2$ (after the error has occurred). Reversible circuits always map (non-)equal bit strings to (non-)equal bit strings. In contrast, the first portion of the circuit $R_1$ (before the error has occurred) can affect concrete inputs
$\vec{x} \in \left\{0,1\right\}^n$. 
But if $\vec{x}$ is sampled randomly, then $R_1 (\vec{x})$ will be
a different, but still random, bit string. The uniform distribution is special in the sense that it is invariant under reversible transformations.
The circuit $R_1$ may affect every concrete input, but it does not affect the underlying distribution.

\subsection{Only error size matters}\label{sub:size}

We have seen that uniformly random inputs can uncover single errors in a general reversible circuit. According to Proposition~\ref{prop:no-masking}, the probability of witnessing a discrepancy only depends on the error, not the underlying circuit structure.

We say that an error $E:\left\{0,1\right\}^n \to \left\{0,1\right\}^n$ has \emph{size} $k$ if it only affects $k$ bits in a nontrivial fashion. The remaining $n-k$ bits are not touched at all. We refer to Fig.~\ref{fig:theory} for a visual illustration of this summary parameter. 
Intuitively, we would expect that ``large'' errors are easier to detect than ``small'' ones and that the number of lines $n$ plays an active role. 
However, the following simple statement
 shows that the probability of detecting an error in the worst case is exponentially suppressed with respect to the error size $k$, but is independent of the actual number of bits $n$.

\begin{lemma}[only error size matters]
\label{lem:error-size}
Suppose that \mbox{$E: \left\{0,1\right\}^n \to \left\{0,1\right\}^n$} is a non-trivial error that only affects $k$ bits in a non-trivial fashion 
and 
$\vec{x} \sim \mathrm{Unif}(\left\{0,1\right\}^n)$ is sampled from the uniform distribution. Then,
\begin{align*}
\mathrm{Pr}\left[ E(\vec{x}) \neq \vec{x}\right] \geq 2^{-(k-1)}.
\end{align*}
\end{lemma}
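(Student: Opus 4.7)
The plan is to reduce the statement to a counting argument about fixed points of a permutation on a much smaller set, and then invoke Lemma~\ref{lem:reversible}~(iii) on that smaller permutation.

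First, I would exploit the structural assumption about $E$. Since $E$ only acts non-trivially on $k$ out of the $n$ bits, it factors (up to a reordering of lines) as $E = E_k \otimes \mathrm{id}_{n-k}$, where $E_k:\{0,1\}^k \to \{0,1\}^k$ is a reversible permutation on the active $k$ bits and the identity acts on the remaining $n-k$ bits. Consequently, $E(\vec{x}) \neq \vec{x}$ happens if and only if $E_k$ moves the substring $\vec{x}_k$ formed by the $k$ active bits; the remaining $n-k$ bits contribute nothing to this event. Since $\vec{x}$ is uniform on $\{0,1\}^n$, its marginal on any fixed subset of $k$ coordinates is uniform on $\{0,1\}^k$, hence
\begin{equation*}
\mathrm{Pr}\left[E(\vec{x}) \neq \vec{x}\right] = \mathrm{Pr}\left[E_k(\vec{x}_k) \neq \vec{x}_k\right],
\end{equation*}
with $\vec{x}_k \sim \mathrm{Unif}(\{0,1\}^k)$.

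Next, I would lower-bound the right-hand side by counting non-fixed points. Because $E$ acts non-trivially on every one of its $k$ target bits, $E_k$ cannot be the identity permutation on $\{0,1\}^k$. Lemma~\ref{lem:reversible}~(iii) applied to $E_k$ then guarantees that at least two strings in $\{0,1\}^k$ are moved by $E_k$ (the minimal non-trivial permutation being a transposition). Therefore
\begin{equation*}
\mathrm{Pr}\left[E_k(\vec{x}_k) \neq \vec{x}_k\right] \;\geq\; \frac{2}{2^k} \;=\; 2^{-(k-1)},
\end{equation*}
which is exactly the claimed bound.

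The only subtlety I anticipate is in how strictly one reads ``only affects $k$ bits in a non-trivial fashion.'' If that phrase just means $E$ depends on (and alters) at most $k$ coordinates, then the tensor-product factorization above is immediate; if instead one insists that all $k$ singled-out bits are genuinely moved, no additional work is required either, since we only need $E_k \neq \mathrm{id}$ to invoke Lemma~\ref{lem:reversible}~(iii). Either way, the bound $2^{-(k-1)}$ is tight, achieved by a $k$-bit transposition padded with identities, so no improvement is possible without further assumptions on the error.
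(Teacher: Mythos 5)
Your proposal is correct and takes essentially the same route as the paper: both reduce to the restriction of $E$ to its $k$ active lines (the paper writes it coordinate-wise WLOG on the least-significant bits, you write it as a factorization $E_k$ tensored with the identity), note this restriction is a reversible, non-identity permutation of $\left\{0,1\right\}^k$, apply Lemma~\ref{lem:reversible}~(iii) to get at least two non-fixed points, and use uniformity of the marginal on the $k$ bits to conclude the bound $2/2^k = 2^{-(k-1)}$.
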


\begin{proof}
Suppose, without loss of generality, that the error $E$ only affects the least-significant $k$ bits, i.e., \mbox{$E(\vec{x})=E(x_n,\ldots,x_1)=(x_n,\ldots,x_{k+1},y_{k},\ldots,y_1)$}, where $(y_k,\ldots,y_1)=\tilde{E}(x_k,\ldots,x_1)$. Since $E$ is reversible, its restriction $\tilde{E}: \left\{0,1\right\}^k \to \left\{0,1\right\}^k$ to the $k$ relevant bits must also be reversible. Moreover, $\tilde{E} \neq \mathrm{id}$, because $E$ is non-trivial. 
Lemma~\ref{lem:reversible} (iii) then implies that there must be at least 2 bit strings of size $k$ that are affected by $\tilde{E}$. 
Finally, we use the fact that $\vec{x}=(x_n,\ldots,x_1) \sim \mathrm{Unif}(\left\{0,1\right\}^n)$ implies that the least-significant $k$ bits are also distributed uniformly: $(x_k,\ldots,x_1) \sim \mathrm{Unif}(\left\{0,1\right\}^k)$.
Therefore,
\begin{align*}
\mathrm{Pr}\left[ E(\vec{x})\neq \vec{x}\right]
= \mathrm{Pr}\left[ \tilde{E}(x_k,\ldots,x_1) \neq (x_k,\ldots,x_1) \right]
\geq \frac{2}{2^k}.
\end{align*}
\end{proof}

This probability bound is actually sharp. Worst-case errors of size $k$ permute exactly 2 out of the $2^k$ possible $k$-bit inputs on which they act. Concrete examples of such a behavior are $\textsc{NOT}$ ($k=1$), $\textsc{CNOT}$ ($k=2$), $\textsc{CCNOT}$ ($k=3$) and, more generally, a $(k-1)$-fold controlled \textsc{NOT} gate on $k$ bits (general $k$). 
The numerical simulations shown in Fig.~\ref{fig:numerics} are based on injecting such worst-case errors at random circuit locations.

\subsection{General confidence bound for detecting single errors}

We now have all necessary ingredients to establish a rigorous performance guarantee for reversible error detection with (uniformly) random inputs. 
The following statement bounds the number of uniformly random inputs that may be required to detect a single error of size $k$.

\begin{theorem} \label{thm:main-restatement}
Fix $R=R_2 \circ R_1$ (ideal circuit), $\tilde{R}=R_2 \circ E \circ R_1$ (single error) and $E$ has size $k$.
Suppose that $\vec{x}_1,\ldots,\vec{x}_N$ are $N$ (independent) uniformly random inputs. Then,
\begin{equation*}
\mathrm{Pr}\Big[ \bigwedge_{1 \leq i \leq N} \left\{\tilde{R}(\vec{x}_i) = R(\vec{x}_i) \right\} \Big]
\leq \exp \left( - N/2^{k-1}\right)
\end{equation*}
In words, the probability of failing to detect a single error is exponentially suppressed in the number $N$ of random test inputs. 
\end{theorem}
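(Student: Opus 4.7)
The plan is to reduce the joint event to a product via independence, convert each factor to an error-only probability using Proposition~\ref{prop:no-masking}, and then apply the worst-case bound of Lemma~\ref{lem:error-size} together with the elementary inequality $1-t \leq e^{-t}$.

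First, I would exploit the fact that the inputs $\vec{x}_1,\ldots,\vec{x}_N$ are independent. Since the events $\{\tilde{R}(\vec{x}_i) = R(\vec{x}_i)\}$ depend only on $\vec{x}_i$ through the deterministic circuits, independence of the inputs carries over to independence of these events. Hence the joint probability factors:
\begin{equation*}
\mathrm{Pr}\Big[ \bigwedge_{i=1}^N \{\tilde{R}(\vec{x}_i) = R(\vec{x}_i)\}\Big] = \prod_{i=1}^N \mathrm{Pr}\left[\tilde{R}(\vec{x}_i) = R(\vec{x}_i)\right].
\end{equation*}

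Next, I would handle each factor uniformly. By Proposition~\ref{prop:no-masking}, every factor equals $\mathrm{Pr}[E(\vec{x}_i) = \vec{x}_i]$, i.e., the probability that the error leaves a random input unchanged. Since $E$ has size $k$, Lemma~\ref{lem:error-size} gives $\mathrm{Pr}[E(\vec{x}_i) \neq \vec{x}_i] \geq 2^{-(k-1)}$, so passing to the complement yields
\begin{equation*}
\mathrm{Pr}\left[\tilde{R}(\vec{x}_i) = R(\vec{x}_i)\right] = 1 - \mathrm{Pr}[E(\vec{x}_i) \neq \vec{x}_i] \leq 1 - 2^{-(k-1)}.
\end{equation*}
Substituting into the product gives an upper bound of $\bigl(1 - 2^{-(k-1)}\bigr)^N$.

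Finally, I would apply the standard inequality $1 - t \leq e^{-t}$ valid for all $t \in \mathbb{R}$ (with $t = 2^{-(k-1)}$) to conclude $\bigl(1 - 2^{-(k-1)}\bigr)^N \leq \exp\bigl(-N/2^{k-1}\bigr)$, which is the desired bound. There is no real obstacle here: all the substantive work has already been done in Proposition~\ref{prop:no-masking} (removing the masking effect of $R_2$ and the input-shuffling effect of $R_1$) and Lemma~\ref{lem:error-size} (reducing the analysis to the $k$ affected bits). The only thing that could conceivably need a sentence of justification is that independence of $\vec{x}_1,\ldots,\vec{x}_N$ implies independence of the derived events, and this follows immediately because the circuits act deterministically. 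The resulting corollary, inverting $\exp(-N/2^{k-1}) \leq \delta$, reproduces the $N \geq \lceil \log(1/\delta)\, 2^{k-1} \rceil$ threshold stated in Theorem~\ref{thm:main}.
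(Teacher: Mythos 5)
Your proposal is correct and follows essentially the same route as the paper: factorize the joint probability by independence, reduce each factor to $\mathrm{Pr}[E(\vec{x}_i)=\vec{x}_i]$ via Proposition~\ref{prop:no-masking}, bound it by $1-2^{-(k-1)}$ using Lemma~\ref{lem:error-size}, and finish with $1-t\leq e^{-t}$. The only cosmetic difference is ordering (the paper treats the $N=1$ case first and then factorizes), which changes nothing of substance.
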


Theorem~\ref{thm:main} above is a streamlined consequence of this observation: setting $N=\lceil \log(1/\delta)2^{k-1}\rceil $ provides a concrete number of repetitions that ensures that we detect the discrepancy with probability (at least) $1-\delta$.

\begin{proof}[Proof of Theorem~\ref{thm:main-restatement}]
For $N=1$ (one random input), the claim readily follows from combining Proposition~\ref{prop:no-masking} and Lemma~\ref{lem:error-size} (more precisely, their contrapositions):
\begin{align*}
\mathrm{Pr}\left[ \tilde{R}(\vec{x}_1) = R(\vec{x}_1)\right]
= \mathrm{Pr}\left[ E(\vec{x}_1) = \vec{x}_1 \right] \leq 1- 2^{-(k-1)}.
\end{align*}
This bound readily extends to the general $N$-case by using the assumption that the individual input strings $\vec{x}_1,\ldots,\vec{x}_N$ are all sampled independently. Joint probabilities of independent events factorize and we conclude
\begin{align}
\mathrm{Pr}\Big[ \bigwedge_{1 \leq i \leq N} \left\{\tilde{R}(\vec{x}_i) = R(\vec{x}_i) \right\} \Big] \nonumber
=& \prod_{i=1}^N 
\mathrm{Pr}\left[ \tilde{R}(\vec{x}_i) = R(\vec{x}_i)\right] \\
\leq & \left( 1- 2^{-(k-1)}\right)^N.
\label{eq:aux1}
\end{align}
Apply $1+x < \exp (x)$ for all $x \in \mathbb{R}$ (convexity of the exponential function) to complete the argument.
\end{proof}

The bound provided in Theorem~\ref{thm:main-restatement} is simple, but not sharp (the inequality $1+x \leq \exp (x)$ is never tight).  As such, it always under-estimates the actual confidence level. This discrepancy is most pronounced for small error sizes $k$. 
The extreme case is a single \textsc{NOT} error ($k=1$). For $k=1$, the bound in Eq.~\eqref{eq:aux1} becomes (exactly) zero. By contraposition, \emph{every possible input bit string is guaranteed to detect a single bit-flip error that is hidden anywhere within the circuit.}

\section{Empirical analysis for multiple errors} \label{sec:numerics}

In the previous section, we have established strong theoretical support for detecting single errors. At its heart has been the decomposition $\tilde{R}=R_2 \circ E \circ R_1$ illustrated in Fig.~\ref{fig:error-decomposition}.
Reversibility and uniformly random inputs have subsequently allowed us to discuss away the circuit portions $R_2$ and $R_1$ completely. In turn, we were able to focus exclusively on the error itself.

\begin{figure}
    \centering
    \resizebox{0.99\linewidth}{!}{
 \begin{tabular}{rcl}
\begin{tikzpicture}[baseline,scale=0.5]
    \foreach \y in {-1.25,-0.75,-0.25,1.25,1.75,2.25}
    {\draw (-4.5,\y) -- (4.5,\y);
    };
    \draw[dotted] (-4.25,0) -- (-4.25,1);
    \draw[dotted] (4.25,0) -- (4.25,1);
    \draw[rounded corners, fill=white] 
    (-4,-1.75) rectangle (-2.5,2.75);
    \fill[rounded corners, blue, opacity=0.1] 
    (-4,-1.75) rectangle (-2.5,2.75);
    \node at (-3.25,0.5) {$R_1$};
    \draw[rounded corners, fill=white] 
    (-2.25,-1.75) rectangle (-1.25,2.75);
    \fill[rounded corners, red, opacity=0.1] 
    (-2.25,-1.75) rectangle (-1.25,2.75);
    \node at (-1.75,0.5) {$E_1$};
    \draw[rounded corners, fill=white] 
    (-1,-1.75) rectangle (1,2.75);
    \fill[rounded corners, blue, opacity=0.1] 
    (-1,-1.75) rectangle (1,2.75);
    \node at (0,0.5) {$R_2$};
    \draw[rounded corners, fill=white] 
    (1.25,-1.75) rectangle (2.25,2.75);
    \fill[rounded corners, red, opacity=0.1] 
    (1.25,-1.75) rectangle (2.25,2.75);
    \node at (1.75,0.5) {$E_2$};
    \draw[rounded corners, fill=white] 
    (2.5,-1.75) rectangle (4,2.75);
    \fill[rounded corners, blue, opacity=0.1] 
    (2.5,-1.75) rectangle (4,2.75);
    \node at (3.25,0.5) {$R_3$};
    \end{tikzpicture} 
    & vs & 
    \begin{tikzpicture}[baseline,scale=0.5]
    \foreach \y in {-1.25,-0.75,-0.25,1.25,1.75,2.25}
    {\draw (-3.5,\y) -- (3.5,\y);
    };
    \draw[dotted] (-3.25,0) -- (-3.25,1);
    \draw[dotted] (3.25,0) -- (3.25,1);
    \draw[rounded corners, fill=white] 
    (-3,-1.75) rectangle (-1.5,2.75);
    \fill[rounded corners, blue,opacity=0.1] 
    (-3,-1.75) rectangle (-1.5,2.75);
    \node at (-2.25,0.5) {$R_1$};
    \draw[rounded corners, fill=white] 
    (-1,-1.75) rectangle (1,2.75);
    \fill[rounded corners, blue,opacity=0.1] 
    (-1,-1.75) rectangle (1,2.75);
    \node at (0,0.5) {$R_2$};
    \draw[rounded corners, fill=white] 
    (1.5,-1.75) rectangle (3,2.75);
    \fill[rounded corners, blue,opacity=0.1] 
    (1.5,-1.75) rectangle (3,2.75);
    \node at (2.25,0.5) {$R_3$};
    \end{tikzpicture} \\
    & \rotatebox{90}{\LARGE $\Leftrightarrow$} & \\
    \begin{tikzpicture}[baseline,scale=0.5]
    \foreach \y in {-1.25,-0.75,-0.25,1.25,1.75,2.25}
    {\draw (-2.75,\y) -- (2.75,\y);
    };
    \draw[dotted] (-2.5,0) -- (-2.5,1);
    \draw[dotted] (2.5,0) -- (2.5,1);
    \draw[rounded corners, fill=white] 
    (-2.25,-1.75) rectangle (-1.25,2.75);
    \fill[rounded corners, red,opacity=0.1] 
    (-2.25,-1.75) rectangle (-1.25,2.75);
    \node at (-1.75,0.5) {$E_1$};
    \draw[rounded corners, fill=white] 
    (-1,-1.75) rectangle (1,2.75);
    \fill[rounded corners, blue, opacity=0.1]
    (-1,-1.75) rectangle (1,2.75);
    \node at (0,0.5) {$R_2$};
    \draw[rounded corners, fill=white] 
    (1.25,-1.75) rectangle (2.25,2.75);
    \fill[rounded corners, red, opacity=0.1] 
    (1.25,-1.75) rectangle (2.25,2.75);
    \node at (1.75,0.5) {$E_2$};
    \end{tikzpicture} 
    & vs & 
    \begin{tikzpicture}[baseline,scale=0.5]
    \foreach \y in {-1.25,-0.75,-0.25,1.25,1.75,2.25}
    {\draw (-1.5,\y) -- (1.5,\y);
    };
    \draw[dotted] (-1.25,0) -- (-1.25,1);
    \draw[dotted] (1.25,0) -- (1.25,1);
    \draw[rounded corners, fill=white] 
    (-1,-1.75) rectangle (1,2.75);
    \fill[rounded corners, blue, opacity=0.1] 
    (-1,-1.75) rectangle (1,2.75);
    \node at (0,0.5) {$R_2$};
    \end{tikzpicture} 
    \end{tabular}}
    \caption{\emph{Partial simplification for multiple errors:} Simulation with uniformly random inputs exposes multiple errors only partially. Everything before the first error ($R_1$) and after the last error ($R_3$) can be safely ignored, but the part in between ($R_2$) does matter. Different circuit structures can lead to strikingly different error detection probabilities.}
    \label{fig:multiple-errors}
\end{figure}
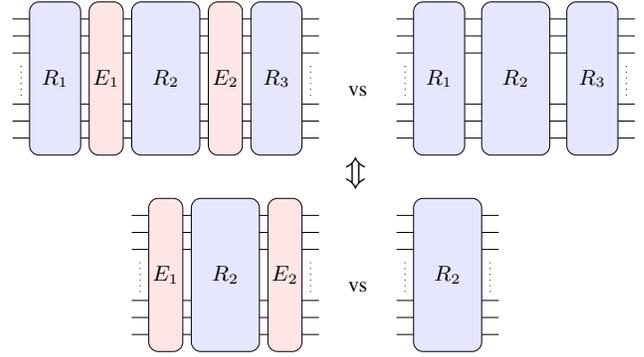

For more than one error, 
this 
is in general not an option anymore. 
While we can safely ignore circuit contributions before the first and after the last error, the circuit in between cannot be ignored, see Fig.~\ref{fig:multiple-errors}.  
The relation between errors and intermediate circuit parts governs how likely it is to witness the overall error.

In this section, we analyze error accumulation effects in generic reversible circuits. 
To obtain guiding intuition, we will first isolate and discuss the two extreme cases. Independent errors (best case, see Section~\ref{sub:best-case}) and maximal masking (worst case, see Section~\ref{sub:worst-case})
turn out to behave in a radically different fashion.
Subsequent numerical studies demonstrate that typical error accumulation effects closely follow the best-case trajectory: Multiple errors are typically \emph{much} easier to detect than a single error.

\subsection{Best-case behavior: Commuting and independent errors} \label{sub:best-case}

\begin{figure}
    \centering
\begin{align*}
\begin{tikzpicture}[baseline,scale=0.5]
    \foreach \y in {-1.25,-0.75,-0.25,1.25,1.75,2.25}
    {\draw (-2.75,\y) -- (2.75,\y);
    };
    \draw[dotted] (-2.5,0) -- (-2.5,1);
    \draw[dotted] (2.5,0) -- (2.5,1);
    \draw[rounded corners, fill=white] 
    (-2.25,-1.75) rectangle (-1.25,2.75);
    \fill[rounded corners, red,opacity=0.1] 
    (-2.25,-1.75) rectangle (-1.25,2.75);
    \node at (-1.75,0.5) {$E_1$};
    \draw[rounded corners, fill=white] 
    (-1,-1.75) rectangle (1,2.75);
    \fill[rounded corners, blue, opacity=0.1]
    (-1,-1.75) rectangle (1,2.75);
    \node at (0,0.5) {$R_2$};
    \draw[rounded corners, fill=white] 
    (1.25,-1.75) rectangle (2.25,2.75);
    \fill[rounded corners, red, opacity=0.1] 
    (1.25,-1.75) rectangle (2.25,2.75);
    \node at (1.75,0.5) {$E_2$};
    \end{tikzpicture} 
     = & 
    \begin{tikzpicture}[baseline,scale=0.5]
    \foreach \y in {-1.25,-0.75,-0.25,1.25,1.75,2.25}
    {\draw (-2.75,\y) -- (2.75,\y);
    };
    \draw[dotted] (-2.5,0) -- (-2.5,1);
    \draw[dotted] (2.5,0) -- (2.5,1);
    (-2.25,-1.75) rectangle (-1.25,2.75);
    \fill[rounded corners, red,opacity=0.1] 
    (-2.25,-1.75) rectangle (-1.25,2.75);
    \node at (-1.75,0.5) {$E_1$};
    \draw[fill=white] (-2,1.05) rectangle (-1.5,2.45);
    \fill[red,opacity=0.2] (-2,1.05) rectangle (-1.5,2.45);
    \fill[rounded corners, red, opacity=0.1] 
    (-1,-1.75) rectangle (0,2.75);
    \node at (-0.5,0.5) {$E_2$};
    \draw[fill=white] (-0.75,-1.45) rectangle (-0.25,-0.05);
    \fill[red,opacity=0.2] (-0.75,-1.45) rectangle (-0.25,-0.05);
    \draw[rounded corners, fill=white] 
    (0.25,-1.75) rectangle (2.25,2.75);
    \fill[rounded corners, blue, opacity=0.1]
    (0.25,-1.75) rectangle (2.25,2.75);
    \node at (1.25,0.5) {$R_2$};
    \end{tikzpicture} 
    =
    \begin{tikzpicture}[baseline,scale=0.5]
    \foreach \y in {-1.25,-0.75,-0.25,1.25,1.75,2.25}
    {\draw (-1.5,\y) -- (2.75,\y);
    };
    \draw[dotted] (-1.25,0) -- (-1.25,1);
    \draw[dotted] (1.25,0) -- (1.25,1);
    \fill[rounded corners, red, opacity=0.1] 
    (-1,-1.75) rectangle (0,2.75);
    \node at (-0.5,0.5) {$\tilde{E}$};
    \draw[fill=white] (-0.75,-1.45) rectangle (-0.25,-0.05);
    \fill[red,opacity=0.2] (-0.75,-1.45) rectangle (-0.25,-0.05);
    \draw[fill=white] (-0.75,1.05) rectangle (-0.25,2.45);
    \fill[red,opacity=0.2] (-0.75,1.05) rectangle (-0.25,2.45);
    \draw[rounded corners, fill=white] 
    (0.25,-1.75) rectangle (2.25,2.75);
    \fill[rounded corners, blue, opacity=0.1]
    (0.25,-1.75) rectangle (2.25,2.75);
    \node at (1.25,0.5) {$R_2$};
    \end{tikzpicture} 
\end{align*}
    \caption{\emph{Best-case scenario for two errors:} One of the errors, say $E_2$, commutes with the relevant circuit part $R_2$. Reordering allows us to treat the two errors as a single effective error
 $\tilde{E}=E_1 \circ E_2$.
In addition, $E_1$ and $E_2$ affect disjoint bit collections (independence) and $\tilde{E}$ factorizes nicely into two disjoint components:
 $\mathrm{Pr}\big[\tilde{R}(\vec{x}) \neq R(\vec{x})\big] = \mathrm{Pr} \big[\tilde{E}(\vec{x}) \neq \vec{x} \big]\geq 1- (1-2^{-(k-1)})^2$ 
 (quadratic improvement). 
    }
    \label{fig:best-case}
\end{figure}
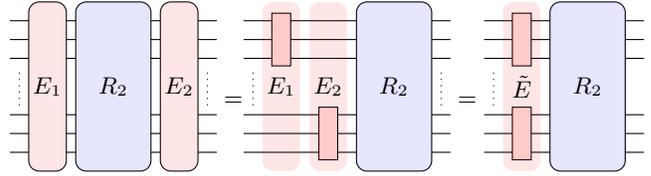

Let us first discuss $l=2$ errors of size $k$. An extension to multiple errors ($l \geq 3$) and different sizes will be straightforward. 
Fig.~\ref{fig:multiple-errors} provides valuable guidance for potential best-case behavior. Suppose that one of the errors, say $E_2$, can be pulled through the central circuit part $R_2$ without affecting it: $E_2 \circ R_2 = R_2 \circ E_2$. 
If circuit and error commute in such a fashion, we can group both errors into a single layer and have effectively reduced the problem to the single-error case which we already understand:
\begin{equation*}
\tilde{R}=R_3 \circ E_2 \circ R_2 \circ E_1 \circ R_1=
(R_3 \circ R_2) \circ (E_2 \circ E_1) \circ R_1.
\end{equation*}
The only remaining question is: what is the probability of failing to detect the cumulative error $E_2 \circ E_1$ with a single random input? This failure probability is smallest if the two errors are \emph{independent} in the sense that they act on disjoint sets of $k$ bits each.
A uniformly random input $\vec{x}\in \mathrm{Unif}(\left\{0,1\right\}^n)$ then ensures that the failure probability factorizes:
\[\resizebox{0.99\linewidth}{!}{$
\mathrm{Pr}\left[ (E_2 \circ E_1) (\vec{x}) = \vec{x}\right]=\prod_{i=1}^2 \mathrm{Pr} \left[ E_i (\vec{x})=\vec{x}\right] 
\leq  \left(1-2^{-(k-1)}\right)^2$}.
\]
This argument readily extends to multiple errors ($l \geq 3$). Taking the complement ensures
\begin{align}
\mathrm{Pr}\left[ \tilde{R}(\vec{x})\neq R(\vec{x})\right]=& 1-
\mathrm{Pr} \left[ E_l \circ \cdots \circ E_1 (\vec{x}) = \vec{x} \right] \nonumber \\
\geq &1- \left(1- 2^{-(k-1)} \right)^l
, \label{eq:best-case-aux}
\end{align}
provided that all $l$ errors commute  with the circuit (first equality) and act on different subsets of $k$ bits each (second inequality).
Rel.~\eqref{eq:best-case-aux} highlights that the probability of (best case) error detection increases substantially with the number of errors $l$. Intuitively, this makes sense: more errors should be easier to detect. 
This insight has implications for the number $N$ of random inputs that are required to detect $l$ best-case errors of size $k$ each. 
To pinpoint them, it is instructive to view a single simulation run as a biased coin toss:
we detect a discrepancy with probability
$p =\mathrm{Pr}\big[\tilde{R}_2 (\vec{x})\neq R_2 (\vec{x})\big]$ (``heads'') and fail to detect it with probability $1-p=\mathrm{Pr}\big[ \tilde{R}_2 (\vec{x}) = R_2 (\vec{x}) \big]$ (``tails''). 
When attempting to detect a discrepancy, we input new randomly generated inputs until we find a mismatch. This is equivalent to tossing the biased coin until ``heads'' appears. 
The expected number of required coin tosses to achieve this goal is $1/p$ (geometric distribution). Together with Rel.~\eqref{eq:best-case-aux}, this analogy allows us to conclude that we expect to require
\begin{align}
N_{\mathrm{expect}}^{(\downarrow)}
\leq &\frac{1}{1- \left(1-2^{-(k-1)}\right)^l}
& 
\text{ (best case)} 
\label{eq:best-case-expectation}
\end{align}
random inputs to detect $l$ commuting and independent errors of size $k$ each. 
This bound is sharp. It holds with equality if each of the $l$ errors is a worst-case error of size $k$, e.g.\ a $(k-1)$-fold controlled \textsc{NOT} gate.

We conclude this section with a simplified interpretation of Rel~\eqref{eq:best-case-expectation}.
For small $l$ (in comparison to $2^{(k-1)}$), the claim is comparable to $ N_{\mathrm{expect}}^{(\downarrow)} \approx 2^{k-1}/l$, which can also be observed in Fig.~\ref{fig:numerics}: the slopes of the solid lines match this estimate rather well whenever the number of errors $l$ is small compared to $2^{(k-1)}$.
Under best-case assumptions, detecting $l$ size $k$-errors is $l$-times easier than detecting a single error of the same size.

\subsection{Worst-case: anti-commuting errors and masking}
\label{sub:worst-case}

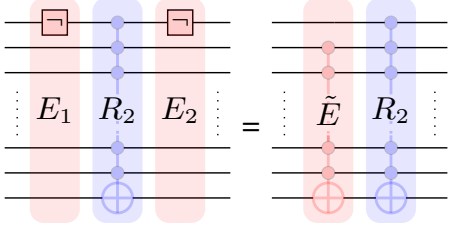
\begin{figure}
    \centering
    \resizebox{0.7\linewidth}{!}{
\begin{tikzpicture}[baseline,scale=0.5]
    \foreach \y in {-1.25,-0.75,-0.25,1.25,1.75,2.25}
    {\draw (-2.25,\y) -- (-0.125,\y);
    \draw (0.125,\y) -- (2.25,\y);
    };
    \draw[dotted] (-2,0) -- (-2,1);
    \draw[dotted] (2,0) -- (2,1);
    \foreach \y in {-0.75,-0.25,1.25,1.75,2.25}
    {\draw[fill=blue,opacity=0.2] (0,\y) circle(0.125);
    \draw[thick,blue,opacity=0.2] (0,\y-0.125) -- (0,\y-0.375);
    };
    \draw[thick,blue,opacity=0.2,dotted] (0,0.875) -- (0,0);
    \draw[thick,blue,opacity=0.2] (0,-0.125) -- (0,0);
    \draw[thick,blue,opacity=0.2] (0,-1.25) circle (0.3);
    \draw[thick,blue,opacity=0.2] (0,-0.75-0.375) -- (0,-1.55);
    \draw[thick,white] (-0.3,-1.25) -- (0.3,-1.25);
    \draw[thick,blue,opacity=0.2] (-0.3,-1.25) -- (0.3,-1.25);
    \fill[rounded corners, red,opacity=0.1] 
    (-1.75,-1.75) rectangle (-0.75,2.75);
    \node at (-1.25,0.5) {$E_1$};
    \draw[fill=white] (-1.5,2) rectangle (-1,2.5);
    \fill[red,opacity=0.2] (-1.5,2) rectangle (-1,2.5);
    \node at (-1.25,2.25) {\scriptsize $\neg$};
    \fill[rounded corners, blue, opacity=0.1]
    (-0.5,-1.75) rectangle (0.5,2.75);
    \node at (0,0.5) {$R_2$};
    \fill[rounded corners, red, opacity=0.1] 
    (0.75,-1.75) rectangle (1.75,2.75);
    \node at (1.25,0.5) {$E_2$};
    \draw[fill=white] (1,2) rectangle (1.5,2.5);
    \fill[red,opacity=0.2] (1,2) rectangle (1.5,2.5);
    \node at (1.25,2.25) {\scriptsize $\neg$};
    \end{tikzpicture} 
     = 
    \begin{tikzpicture}[baseline,scale=0.5]
    \foreach \y in {-1.25,-0.75,-0.25,1.25,1.75,2.25}
    {\draw (-1.75,\y) -- (-0.75,\y);
    \draw (-0.5,\y) -- (0.5,\y);
    \draw (0.75,\y) -- (1.75,\y);
    };
    \draw (-0.75,2.25) -- (-0.5,2.25);
    \draw[dotted] (-1.5,0) -- (-1.5,1);
    \draw[dotted] (1.5,0) -- (1.5,1);
    \draw[fill=blue,opacity=0.2] (0.625,2.25) circle(0.125);
    \draw[thick,blue,opacity=0.2] (0.625,2.25-0.125) -- (0.625,2.25-0.375);
    \draw[thick,blue,opacity=0.2] (0.625,-0.125) -- (0.625,0);
    \draw[thick,blue,opacity=0.2,dotted] (0.625,0.875) -- (0.625,0);
    \draw[thick,red,opacity=0.2] (-0.625,-0.125) -- (-0.625,0);
    \draw[thick,red,opacity=0.2,dotted] (-0.625,0.875) -- (-0.625,0);
    \foreach \y in {-0.75,-0.25,1.25,1.75}
    {\draw[fill=blue,opacity=0.2] (0.625,\y) circle(0.125);
    \draw[thick,blue,opacity=0.2] (0.625,\y-0.125) -- (0.625,\y-0.375);
    \draw[fill=red,opacity=0.2] (-0.625,\y) circle(0.125);
    \draw[thick,red,opacity=0.2] (-0.625,\y-0.125) -- (-0.625,\y-0.375);
    };
    draw[thick,blue,opacity=0.2] (0.625,-0.125) -- (0.625,0);
    \draw[thick,blue,opacity=0.2] (0.626,-1.25) circle (0.3);
    \draw[thick,blue,opacity=0.2] (0.625,-0.75-0.375) -- (0.625,-1.55);
    \draw[thick,white] (-0.3+0.625,-1.25) -- (0.3+0.625,-1.25);
    \draw[thick,blue,opacity=0.2] (-0.3+0.625,-1.25) -- (0.3+0.625,-1.25);
    \draw[thick,red,opacity=0.2] (-0.625,-0.125) -- (-0.625,0);
    \draw[thick,red,opacity=0.2] (-0.626,-1.25) circle (0.3);
    \draw[thick,red,opacity=0.2] (-0.625,-0.75-0.375) -- (-0.625,-1.55);
    \draw[thick,white] (-0.3-0.625,-1.25) -- (0.3-0.625,-1.25);
    \draw[thick,red,opacity=0.2] (-0.3-0.625,-1.25) -- (0.3-0.625,-1.25);
    \fill[rounded corners, blue, opacity=0.1]
    (0.125,-1.75) rectangle (1.125,2.75);
    \node at (0.625,0.5) {$R_2$};
    \fill[rounded corners, red, opacity=0.1]
    (-1.125,-1.75) rectangle (-0.125,2.75);
    \node at (-0.625,0.5) {$\tilde{E}$};
    \end{tikzpicture}
    }
    \caption{\emph{Worst-case scenario for two errors:} 
    Two bit-flip errors ($k=1$) affect one control line of a $(n-1)$-fold controlled \textsc{NOT}-gate. 
    These errors do not commute with the relevant circuit part $R_2$. Quite the opposite: two errors with size $k=1$ produce an effective error $\tilde{E}$ of size $k=(n-1)$. To make matters even worse, such a $(n-2)$-fold controlled \textsc{NOT} error is extremely difficult to detect:
    $\mathrm{Pr}\big[\tilde{R}(\vec{x})\neq R(\vec{x})\big]=\mathrm{Pr}\big[\tilde{E}(\vec{x})\neq \vec{x}\big]=4/2^{n}$ (masking).
    }
    \label{fig:worst-case}
\end{figure}

We expect that worst case error accumulation should occur when errors and relevant circuit portion do not commute at all (``anti-commutation''). If this is the case, the probability of detecting errors can become exponentially small in the total number of bits. We illustrate this by means of an example that is illustrated in Fig.~\ref{fig:worst-case}: $E_1$ and $E_2$ are bit-flip errors ($k=1$) that affect the first bit while $R_2: \left\{0,1\right\}^n \to \left\{0,1\right\}^n$ is a $(n-1)$-fold controlled \textsc{NOT}-gate. 
It is easy to check that
\begin{equation*}
E_2 \circ R_2 \circ E_1 = \tilde{E} \circ R_2,
\end{equation*}
where $\tilde{E}$ is a $(n-2)$-fold controlled \textsc{NOT} gate that acts on all bits, except the very first one ($k=n-1$).
This is a single worst-case error of almost maximal size. Proposition~\ref{prop:no-masking} and Lemma~\ref{lem:error-size} assert
\begin{equation*}
\mathrm{Pr} \big[ \tilde{R}(\vec{x}) \neq R(\vec{x})\big]= \mathrm{Pr} \left[\tilde{E}(\vec{x}) \neq \vec{x} \right]= \frac{4}{2^n}.
\end{equation*}
This success probability is exponentially small in the total number of bits and we expect to require a total of
\begin{align}
N_{\mathrm{expect}}^{(\uparrow)} \geq & 2^{(n-2)}
& \text{(worst case)} \label{eq:worst-case-expectation}
\end{align}
random inputs in order to detect the discrepancy. Even worse error accumulation effects can occur for more errors ($l \geq 3$) and/or larger error sizes ($k \geq 2$). But already 
Rel.~\eqref{eq:worst-case-expectation} is almost as bad as it can be. It is only a factor of two away from $2^{n-1}$---the absolute worst case for distinguishing \emph{any} pair of reversible circuits, see Lemma~\ref{lem:reversible} (iii).

\subsection{Empirical studies}

The multiple-error case is intricate by comparison, because the interplay between error (locations) and underlying circuit geometry starts to matter. We have seen that this leads to strikingly different best- (commuting errors, Sub.~\ref{sub:best-case}) and worst-case (anticommuting errors, Sub.~\ref{sub:worst-case}) behavior. 
Concrete problem instances fall into the wide range between these extreme cases. 
In this section, we employ numerics to delineate \emph{typical} behavior. 

We study the effect of size-$k$ errors in reversible circuits with~$n$ lines.
For a given number of lines~$n$, we construct random reversible circuits with~$g\approx\mathcal{O}(n^2)$ arbitrary multi-controlled NOT gates.
When injecting errors of size $k$, we always consider $(k-1)$-fold controlled NOT gates which represent the worst case behavior, as discussed in Section~\ref{sub:size}.
Without loss, we assume that these errors are 
geometrically local,~i.e.,~they only affect neighbouring lines. 
All experiments were repeated $\num{10000}$ times with different random seeds in order to ensure adequate statistical uniformity.

\begin{figure}[htbp]
\centering
\includegraphics[width=0.99\linewidth]{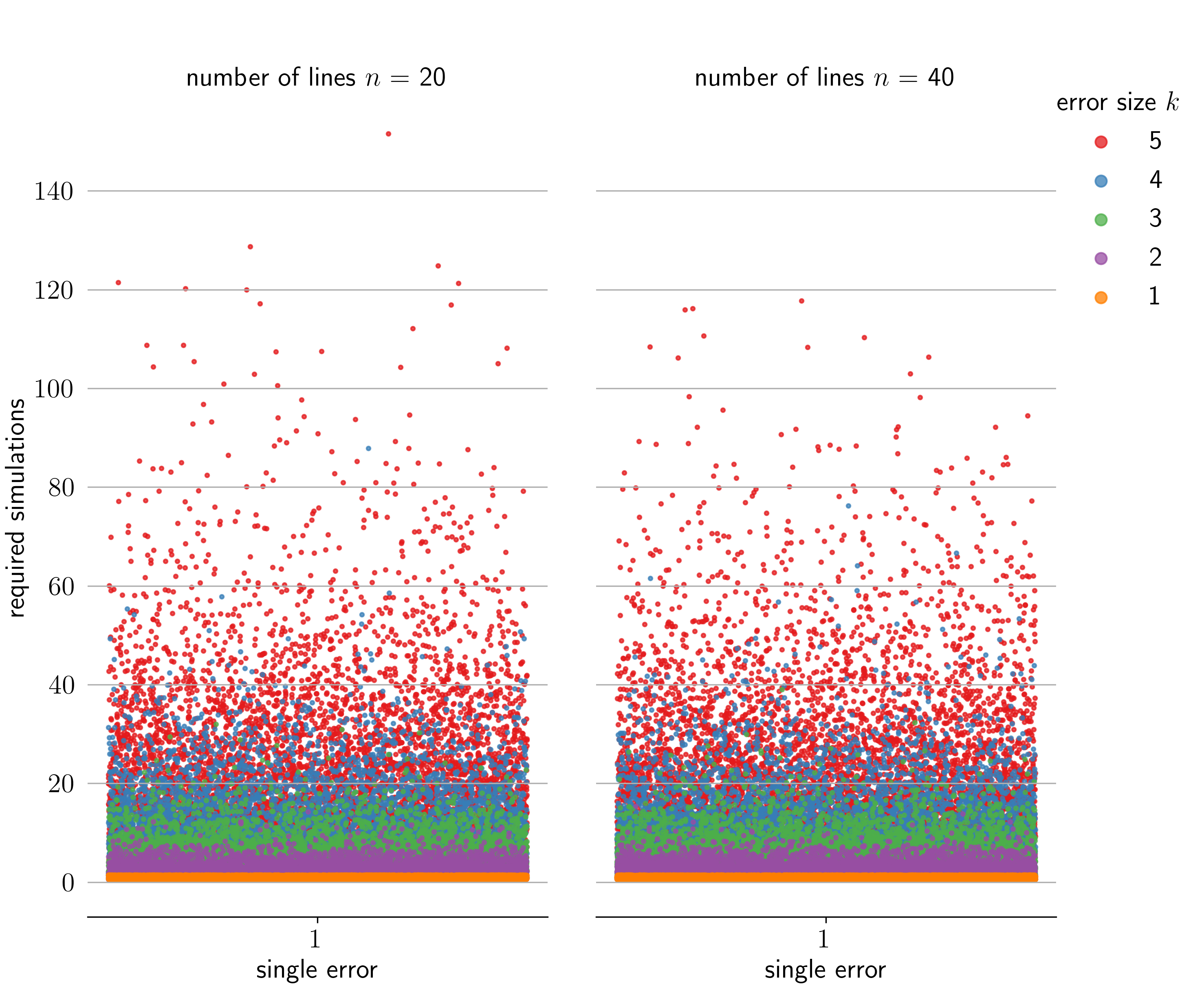}
\caption{\emph{Confirmation of theoretical results}: Scatter-plot of required simulations ($y$-axis) for detecting a single error of size $k$ in a circuit with $n=20$ (left plot) and $n=40$ (right plot) lines. Different colors denote varying values of \mbox{$k\in\{1, 2, 3, 4, 5\}$} . This experimentally confirms that the distribution of simulations does not depend on the number of lines and that the number of required simulations grows exponentially with the size of the error.}
\label{fig:theory-backup}
\end{figure}

First and foremost, we confirm interesting aspects of the theory developed in Sec.~\ref{sec:theory}.
To this end, we considered the injection of a single size-$k$-error and count
the required number of simulations for detecting this error. The results are depicted in Fig.~\ref{fig:theory-backup}.
In contrast to classical intuition,
the probability of detecting a single error of size $k$ is (1) completely independent of the circuit under consideration, and (2) diminishes exponentially in the error size $k$, i.e., the smaller the error, the greater its impact.
This is in excellent agreement with Theorem~\ref{thm:main-restatement}.
On average,
the required simulations exactly follow the predicted $2^{k-1}$ trajectory with no apparent variation. Additionally, the distributions of results is the same when simulating the circuits $R=R_2\circ R_1$ and $\tilde{R}=R_2\circ E\circ R_1$ as compared to only simulating the error $E$ itself.

The next set of numerical experiments pilots us in more interesting territory. Namely, the multiple-error case.
We have already teased the results in the introduction and summarized them in Fig.~\ref{fig:numerics}. 
The averaged number of inputs highlights an excellent agreement between the observed behavior and the best-case scenario discussed in Section~\ref{sub:best-case}. 
The deviation from this optimum for higher numbers of errors can be explained by accumulation affects of errors not acting independently (see Section~\ref{sub:worst-case}).

\begin{figure}[htbp]
\centering
\includegraphics[width=\linewidth]{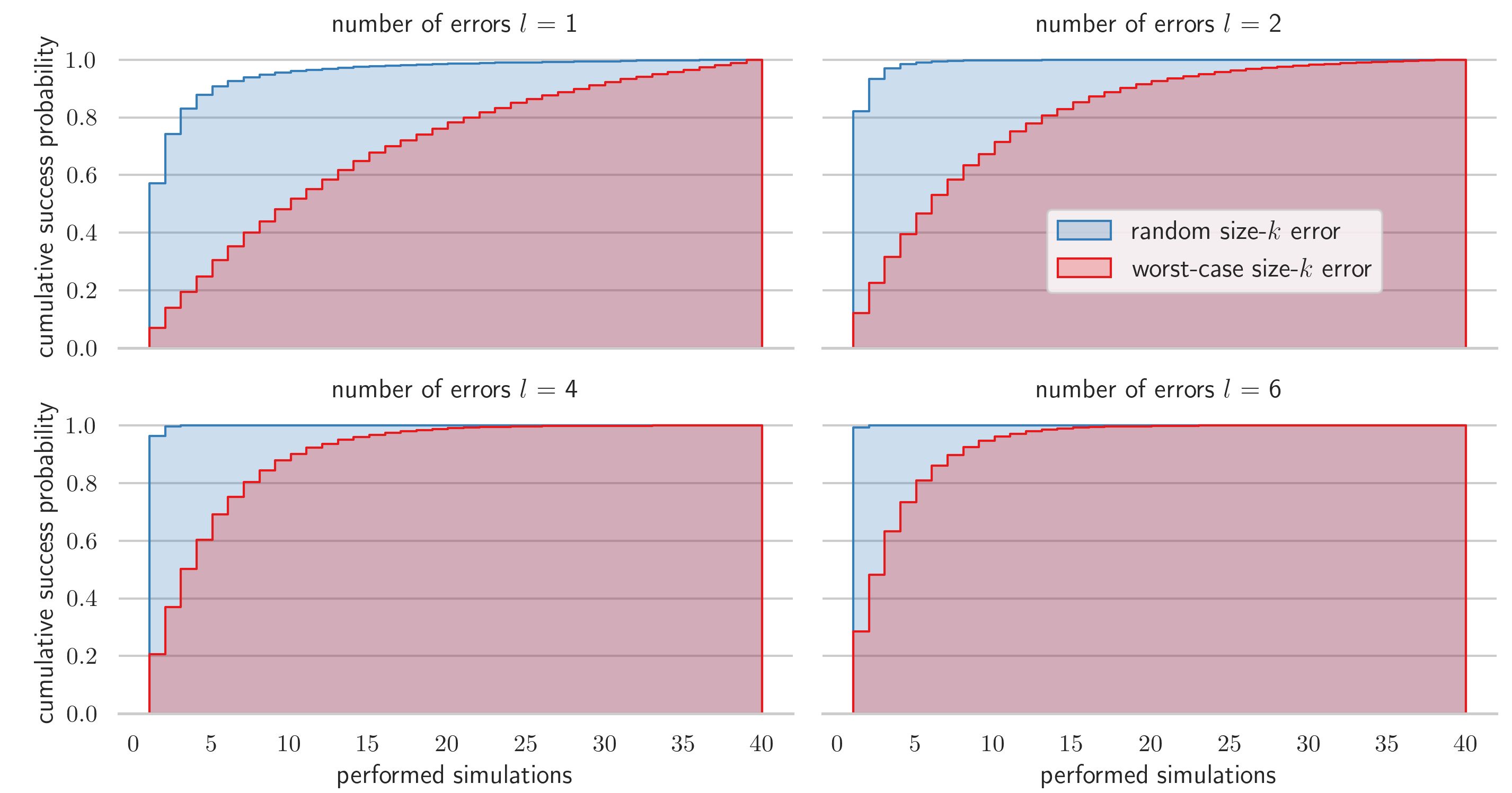}
\caption{\emph{Comparison of worst-case and average-case errors }: performed simulations ($x$-axis) vs.\ cumulative distribution function (cdf) for detecting $l=1,2,4,6$ errors of size $k=5$ (\mbox{$y$-axis}). The red curve corresponds to injecting worst-case errors, 
while the blue
curve delineates the cdf for detecting \emph{randomly generated} errors of the same size. 
This goes to show, that 
average-case errors require far less simulations than worst-case ones.
}
\label{fig:average-case}
\end{figure}

Last but not least, we emphasize that---up to this point---theoretical and empirical results have been contingent on a worst-case assumption: each injected size-$k$ error is a $(k-1)$-fold controlled \text{NOT}-gate. 
In a final series of evaluations, we analyzed the success probability after conducting a certain number of simulations when choosing errors \emph{at random}. More precisely, each size-$k$ error is a 
randomly selected gate sequence with the additional constraint that none of the $k$ relevant lines remain unaffected (such a scenario would produce an error of size (at most) $(k-1)$). 
We expect that this error model captures typical behavior in a more accurate fashion.
The results are shown in Fig.~\ref{fig:average-case}
and highlight a considerable discrepancy between random (blue) and worst-case (red) errors. 
This is not at all surprising. Random errors of size $k$ tend to factorize into several independent contributions and the probabilities of detecting them with random inputs factorizes accordingly, see Sub.~\ref{sub:best-case}. 
Such factorizations lead  
to an increased error detection probability 
within (very) few simulation runs.

\section{Conclusion}
In this work, we have shown the impact of the reversible circuit paradigm on the probability of detecting errors in circuits.
Our rigorous analysis shows, that, as opposed to classical/irreversible circuits, reversible circuits can never mask single errors and, that the probability of detecting a single error only depends on the error's size and not at all on the surrounding circuit.
Empirical evaluations have shown that, in case of multiple errors, the detection probability is very close to the theoretical best case.
Finally, we have observed that, in case the assumption of worst-case errors is dropped, the probability of detecting these errors is increased even more.

\section*{Acknowledgments} 
The authors want to thank J. K\"ung for inspiring discussions throughout the early stages of this project.

This work has partially been supported by the LIT Secure and Correct Systems Lab funded by the State of Upper Austria as well as by BMK, BMDW, and the State of Upper Austria in the frame of the COMET Programme managed by FFG.
\printbibliography

\end{document}